\date{}
\newcommand{\Sum}{\sum\limits}
\newtheorem{dummytheorem}{Dummy-Theorem}[section]
\newtheorem{lemma}[dummytheorem]{Lemma}
\newtheorem{theorem}[dummytheorem]{Theorem}
\newtheorem{corollary}[dummytheorem]{Corollary}
\newtheorem{claim}[dummytheorem]{Claim}
\newcommand{\uint}{\mathbb{N}}
\newcommand{\rational}{\mathbb{Q}}
\newcommand{\oli}[1]{\overline{#1}}
\newcommand{\eps}{\varepsilon}
\newcommand{\card}[1]{\##1}
\newcommand{\length}[1]{\mathrm{length}({#1})}
\newcommand{\pn}[1]{\textnormal{#1}}
\newcommand{\algosettings}{
\SetAlgoNoLine
\SetKwSty{texttt}
\SetFuncSty{texttt}
\SetArgSty{texttt}
\SetDataSty{texttt}
\SetAlFnt{\texttt}
\SetNlSty{texttt}{\normalsize}{}
\NoCaptionOfAlgo
\SetKwInOut{Input}{\textrm{Input}}
\SetKwInOut{Output}{\textrm{Output}}
}
\newcommand{\kSTSP}{\pn{$k$-MaxSTSP}}
\newcommand{\kATSP}{\pn{$k$-MaxATSP}}
\newcommand{\kTSPApprox}{\pn{Alg-$k$-MaxTSP}}
\newcommand{\kMaxSATbf}{\pn{\bf $\boldsymbol{k}$-MaxSAT}}
\newcommand{\kMaxSAT}{\pn{$k$-MaxSAT}}
\newcommand{\kWSATApprox}{\pn{Alg-$k$-MaxSAT}}
\newcommand{\FixedDCC}{{$\mathrm{MaxDCC_F}$}}
\newcommand{\FixedUCC}{{$\mathrm{MaxUCC_F}$}}
\newcommand{\bfFixedDCC}{{$\boldsymbol{\mathrm{MaxDCC_F}}$}}
\newcommand{\bfFixedUCC}{{$\boldsymbol{\mathrm{MaxUCC_F}}$}}
\newcommand{\norm}[1]{||{#1}||}
\begin{document}
\selectlanguage{english}

\title{Applications of Discrepancy Theory in Multiobjective Approximation}

\author{
Christian Glaßer \hspace*{6mm} 
Christian Reitwießner \hspace*{6mm} 
Maximilian Witek\\[5.7mm] 
{Julius-Maximilians-Universität Würzburg, Germany} \\
{\tt \{glasser,reitwiessner,witek\}@informatik.uni-wuerzburg.de}}

\maketitle

\begin{abstract}
We apply a multi-color extension of the Beck-Fiala theorem to show
that the multiobjective maximum traveling salesman problem is
randomized $\nicefrac{1}{2}$-approximable on directed graphs and
randomized $\nicefrac{2}{3}$-approximable on undirected graphs.
Using the same technique we show that the multiobjective maximum
satisfiablilty problem is $\nicefrac{1}{2}$-approximable.
\end{abstract}


\section{Introduction}

We study multiobjective variants of
the traveling salesman problem and the satisfiability problem.
\begin{itemize}
    \item
    The $k$-objective maximum traveling salesman problem:
    Given is a directed/undirected complete graph with edge weights from $\uint^k$.
    Find a Hamiltonian cycle of maximum weight.
    \item
    The $k$-objective maximum weighted satisfiability problem:
    Given is a Boolean formula in conjunctive normal form and
    for each clause a non-negative weight in $\uint^k$.
    Find a truth assignment that maximizes the sum of the weights
    of all satisfied clauses.
\end{itemize}
In general we cannot expect to find a single solution
that is optimal with respect to all objectives.
Instead we are interested in the {\em Pareto set}
which consists of all optimal solutions in the sense that
there is no solution that is at least as good in all objectives and
better in at least one objective.
Typically, the Pareto set has exponential size,
and this particularly holds for the traveling salesman
and the satisfiability problems considered here.
We are hence interested in computing an approximation of the Pareto set.

A popular strategy for approximating single-objective traveling salesman and
single-objective satisfiability is to compute two or more alternatives out of which
one chooses the best one:
\begin{itemize}
\item For each cycle in a maximum cycle cover of a graph, 
remove the edge with \emph{the lowest weight},
and connect the remaining paths to a Hamiltonian cycle.
\item For some formula, 
take an arbitrary truth assignment and its complementary truth assignment,
and return the one with \emph{the highest weight} of satisfied clauses.
\end{itemize}
However, in the presence of multiple objectives,
these alternatives can be incomparable and hence we need an argument
that allows to \emph{appropriately combine} incomparable alternatives.

While previous work focused on problem-specific properties
to construct solutions of good quality,
we show that the Beck-Fiala theorem~\cite{BF81} from discrepancy theory 
and its multi-color extension due to Doerr and Srivastav~\cite{DS03}
provide a general and simple way to combine alternatives appropriately.
Its application leads to simplified and improved approximation algorithms for the
$k$-objective maximum traveling salesman problem on directed and undirected
graphs
and the $k$-objective maximum weighted satisfiability problem.


\section{Preliminaries}

\paragraph{Multiobjective Optimization}
Let $k \geq 1$ and consider some $k$-objective maximization
problem $\mathcal{O}$ that consists of a set of instances $\mathcal{I}$, a set of solutions
$S(x)$ for each instance $x \in \mathcal{I}$, and a function $w$ assigning a
$k$-dimensional weight $w(x, s) \in \uint^k$ to each solution $s \in S(x)$
depending also on the instance $x \in \mathcal{I}$. If the instance $x$ is clear
from the context, we also write $w(s) = w(x, s)$.  The components of $w$ are
written as $w_i$.
For weights $a = (a_1 , \dots,
a_k)$, $b = (b_1 , \dots, b_k ) \in \uint^k$ we write $a \ge b$ if $a_i \ge b_i$
for all $i$.

Let $x \in \mathcal{I}$.
The Pareto set of $x$, the set of all optimal solutions, is the set $\{s \in S(x)
\mid \neg \exists s' \in S(x)\,\, (w(x, s') \ge w(x, s) 
\text{ and } w(x, s') \neq w(x, s))\}$. For solutions $s,
s' \in S(x)$ and $\alpha < 1$ we say $s$ is $\alpha$-approximated by $s'$ if
$w_i(s') \ge \alpha \cdot w_i (s)$ for all $i$. We call a set of solutions
$\alpha$-approximate Pareto set of $x$ if every solution $s \in S(x)$ (or equivalently, every
solution from the Pareto set) is $\alpha$-approximated by some $s'$ contained in
the set.

We say that some algorithm is an $\alpha$-approximation algorithm for
$\mathcal{O}$ if
it runs in polynomial time and returns an $\alpha$-approximate Pareto set of $x$
for all inputs $x \in \mathcal{I}$. We call it randomized if it is
allowed to fail with probability at most $\nicefrac{1}{2}$.
An algorithm is an FPTAS (fully polynomial-time approximation scheme) for
$\mathcal{O}$,
if on input $x$ and $0<\eps<1$ it computes a
$(1-\eps)$-approximate Pareto set of $x$ in time polynomial in
$\nicefrac{1}{\eps} + \length{x}$.
If the algorithm is randomized it is called FPRAS
(fully polynomial-time randomized approximation scheme).
If for each $\eps$, the runtime of such an algorithm is polynomial in
$\length{x}$, we call it PTAS (polynomial-time approximation scheme)
or PRAS (polynomial-time randomized approximation scheme).

\paragraph{Graph Prerequisites}
An \emph{$\uint^k$-labeled directed (undirected) graph} 
is a tuple $G=(V,E,w)$, 
where $V$ is some finite set of vertices, 
$E \subseteq V \times V$ ($E \subseteq {V \choose 2}$)
is a set of directed (undirected) edges, 
and $w \colon E \to \uint^k$ is a $k$-dimensional weight function.
If $E= (V \times V) \setminus \{(i,i) \mid i \in V\}$
($E = {V \choose 2}$) then $G$ is called \emph{complete}.
We denote the $i$-th component of $w$ by $w_i$ 
and extend $w$ to sets of edges by taking the sum over the weights of all edges in the set. 
A \emph{cycle (of length $m \ge 1$)} in $G$ is an alternating sequence of vertices and edges $v_0,e_1,v_1, \dots e_{m},v_m$, 
where $v_i \in V$, $e_j \in E$,
$e_j=(v_{j-1},v_j)$ ($e_j=\{v_{j-1},v_j\}$) for all $0 \leq i \leq m$ and $1 \leq j \leq m$,
neither the sequence of vertices $v_0,v_1,\dots,v_{m-1}$ nor the sequence of
edges $e_1,\dots,e_m$ contains any repetition, and $v_m=v_0$.
A cycle in $G$ is called \emph{Hamiltonian}
if it visits every vertex in $G$.
A set of cycles in $G$ is called \emph{cycle cover}
if for every vertex $v \in V$ it contains exactly one cycle that visits $v$.
For simplicity we interpret cycles and cycle covers as sets of edges
and can thus (using the above mentioned extension of $w$ to sets of edges) 
write $w(C)$ for the (multidimensional) weight of a cycle cover $C$ of $G$.

\paragraph{Approximating Cycle Covers} We will consider approximation algorithms for the multiobjective traveling salesman problem
using a multiobjective version of the maximum cycle cover problem.
For directed input graphs we have the following problem definition.

\begin{quote}
\begin{tabbing}
\textbf{$\boldsymbol{k}$-Objective Maximum Directed Edge-Fixed $\boldsymbol{c}$-Cycle Cover}
\textbf{($\boldsymbol{k}$-$\boldsymbol{c}$-\bfFixedDCC)}\\
Instance: \= $\uint^k$-labeled complete directed graph $(V,E,w)$ and $F \subseteq E$\\
Solution: \> Cycle cover $C \subseteq E$ with at least $c$ edges per cycle and $F \subseteq C$\\
Weight: \> $w(C)$
\end{tabbing}
\end{quote}

For undirected input graphs we analogously define the
$k$-objective maximum undirected edge-fixed $c$-cycle cover problem
(\textbf{$\boldsymbol{k}$-$\boldsymbol{c}$-\bfFixedUCC}, for short).
Let 
 \textbf{$\boldsymbol{k}$-$\boldsymbol{c}$-UCC}
(\textbf{$\boldsymbol{k}$-$\boldsymbol{c}$-DCC})
denote the problems we obtain from 
$k$-$c$-\FixedDCC\ (\textbf{$\boldsymbol{k}$-$\boldsymbol{c}$-\bfFixedUCC})
if we require $F=\emptyset$.
Using this notation we obtain the usual cycle cover problems
\textbf{$\boldsymbol{k}$-MaxDCC} as $k$-$0$-DCC and
\textbf{$\boldsymbol{k}$-MaxUCC} as $k$-$0$-UCC.

Manthey and Ram \cite{MR09} show by a reduction to matching 
that there is an FPRAS for $k$-objective \emph{minimum} cycle cover problems.
The same technique can be used to
show that there are FPRAS for $k$-MaxDCC and $k$-MaxUCC \cite{man09}.
We show that there are FPRAS for $k$-$2$-\FixedDCC\ and $k$-$3$-\FixedUCC\ by
a reduction to $k$-MaxDCC and $k$-MaxUCC.

\begin{theorem}\label{theoremkcc} For every $k \ge 1$,
$k$-$2$-\FixedDCC\ and $k$-$3$-\FixedUCC\ admit an FPRAS.
\end{theorem}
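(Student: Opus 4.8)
The plan is to reduce $k$-$2$-\FixedDCC\ to $k$-MaxDCC and $k$-$3$-\FixedUCC\ to $k$-MaxUCC, and then invoke the known FPRAS for the latter two. The first observation is that on complete graphs the ``at least $c$ edges per cycle'' requirement is vacuous: a complete directed graph is loopless, so every cycle already has length at least $2$, and in a complete undirected graph a length-$2$ cycle would repeat an edge, so every cycle already has length at least $3$. Hence the only genuinely new ingredient compared to $k$-MaxDCC and $k$-MaxUCC is the demand $F \subseteq C$. I would first check that $F$ is \emph{feasible}, i.e.\ that in the directed case every vertex has in- and out-degree at most $1$ in $F$ (degree at most $2$ in the undirected case); if not, no solution exists and the empty set is a correct output. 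A feasible $F$ decomposes into vertex-disjoint simple paths and simple cycles. Each fixed cycle is forced in its entirety, so I would delete its vertices and record its weight in a global constant $B \in \uint^k$ that collects the weight of all fixed edges.

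The heart of the reduction is contraction of the fixed paths. For the directed case, I would contract each maximal fixed path $P$ running from $s_P$ to $t_P$ into a single vertex $v_P$, add the internal path weight to $B$, and let every edge entering $v_P$ inherit the weight of the corresponding edge entering $s_P$ and every edge leaving $v_P$ inherit the weight of the corresponding edge leaving $t_P$. The contracted graph $G'$ is again complete and loopless, and cycle covers of $G'$ correspond weight-preservingly (up to the additive constant $B$) to cycle covers of $G$ containing $F$ --- with one exception: \emph{closing} a path $P$ into its own cycle via the edge $(t_P,s_P)$ corresponds to a self-loop at $v_P$, which $k$-MaxDCC forbids. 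The clean way to recover this option is to pass through the bipartite matching problem underlying the $k$-MaxDCC FPRAS of \cite{man09}, where closing $P$ is simply the diagonal edge matching the out-copy of $v_P$ to its in-copy, carried with weight $w(t_P,s_P)$; alternatively one simulates the self-loop inside $k$-MaxDCC by a constant-size dummy $2$-cycle gadget. Since the constant $B$ is common to \emph{every} solution and all weights are nonnegative, a multiplicative $(1-\eps)$-approximation of the variable part survives the shift by $B$ (because $B_i \ge (1-\eps)B_i$), so an approximate Pareto set of $G'$ maps back to an approximate Pareto set of the original instance.

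For the undirected case I would proceed analogously, contracting each maximal fixed path $P$ and recording its internal weight in $B$, but here a subtlety appears: the representative of $P$ must keep \emph{two} free ports, its endpoints $s_P$ and $t_P$, since the cover attaches one external edge at each and these generally have different weights. I would therefore encode the contracted instance in the general-matching formulation that underlies the $k$-MaxUCC FPRAS, using the standard vertex gadget so that each port receives exactly one external edge; closing $P$ into a cycle is then encoded by the edge joining its two ports and is offered only when the resulting cycle has length at least $3$ (that is, when $P$ has at least two edges), which keeps every produced cover a legal $k$-$3$-\FixedUCC\ solution.

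I expect the main obstacle to be verifying that these correspondences are exactly weight-preserving and lossless in both directions: that every original solution --- including every way of closing fixed paths into their own cycles --- is represented, that no cover of the transformed instance decodes to an illegal or over-counted original solution (the self-loop and two-port bookkeeping being the delicate points), and that expanding a short cycle of the contracted graph never yields a cycle shorter than $c$. Together with the observation that the common constant $B$ does not degrade the multiplicative ratio, these facts turn an approximate Pareto set for $k$-MaxDCC (resp.\ $k$-MaxUCC) into one for $k$-$2$-\FixedDCC\ (resp.\ $k$-$3$-\FixedUCC), which is exactly the claimed FPRAS.
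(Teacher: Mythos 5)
Your core idea --- check feasibility of $F$, contract the fixed paths, absorb their weight into a nonnegative additive constant $B$ (which, as you correctly argue, cannot hurt a multiplicative $(1-\eps)$ guarantee), and observe that the cycle-length constraints are vacuous on complete graphs --- is sound in outline, and the length-constraint observation coincides with the paper's. But there is a genuine gap: the proposal never delivers the promised reduction to $k$-MaxDCC and $k$-MaxUCC. As you notice yourself, the contracted instance is simply not an instance of those problems: closing a fixed path corresponds to a self-loop in the directed case and requires a two-port vertex in the undirected case, while $k$-MaxDCC and $k$-MaxUCC are defined only on loopless complete graphs. Your main fallback is to open the black box and modify the matching formulation underlying the FPRAS of \cite{man09}; this could be made to work (multiobjective matching admits an FPRAS on such modified graphs via the exact-matching route), but that is exactly the nontrivial content of the claim, and you assert it rather than prove it --- in effect you are re-deriving a cycle-cover FPRAS for a generalized problem instead of invoking an existing one. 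Moreover, your alternative fix, the ``dummy $2$-cycle gadget,'' fails as stated: since instances must be complete graphs, the dummy vertex $v_P'$ has (weight-$0$) edges to every vertex and must be covered in every cycle cover, so the FPRAS may return covers in which $v_P'$ forms a $2$-cycle with an ordinary vertex $u$, recouping the lost gadget weight elsewhere; such a cover decodes to a self-loop at $u$, i.e.\ to nothing, and it may be the only cover approximating a given target, so discarding it breaks the approximation guarantee.

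The paper's proof avoids all of this with a much simpler, genuinely black-box reduction, which is the idea your proposal is missing: extend the weight function by a $(k+1)$-st objective that is the indicator of $F$ (weight $1$ on edges of $F$, weight $0$ elsewhere), run the known FPRAS for $(k+1)$-MaxDCC / $(k+1)$-MaxUCC with precision $\eps' = \min\{\eps, \nicefrac{1}{(|F|+1)}\}$, and return exactly those covers that contain all of $F$. If any cover $C \supseteq F$ exists, then any $C'$ that $(1-\eps')$-approximates it satisfies $w'_{k+1}(C') \ge (1-\eps')\,|F| > |F|-1$, and since this objective is integer-valued and counts edges of $F$, it follows that $F \subseteq C'$; the first $k$ objectives are $(1-\eps)$-approximated because $\eps' \le \eps$. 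No contraction, no feasibility case analysis, and no gadgets are needed.
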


\begin{proof}
For every $l \ge 1$,
let $l$-MaxDCC-Approx ($l$-MaxUCC-Approx) 
denote the FPRAS for $l$-MaxDCC ($l$-MaxUCC).
We begin with the directed case.

Let $k \ge 1$.
On input of the $\mathbb{N}^k$-labeled complete directed graph $G=(V,E,w)$ 
and $F \subseteq E$,
let $G'=(V,E,w')$,
where $w' \colon E \to \mathbb{N}^{k+1}$
such that for all $e \in E$,
\begin{align*}
w'_i(e) &= w_i(e) &\text{for $1 \le i \le k$}\\
w'_{k+1}(e) &=
\begin{cases}
1 &\text{ if } e \in F \\
0 &\text{ otherwise.}
\end{cases}
\end{align*}
For $\eps>0$, apply $(k+1)$-MaxDCC-Approx to $G'$ with approximation ratio
$\eps' = \min\{\eps, \nicefrac{1}{(r+1)}\}$, where $r := \card{F}$
and return the obtained set of cycle covers that contain all edges from $F$.

Let $C$ be some (arbitrary) cycle cover with $F \subseteq C$.  If no
such cycle cover exists, we are done. Otherwise,
we have $w'_{k+1}(C) = r$, and with probability at least $\nicefrac{1}{2}$
the FPRAS must have returned some cycle cover $C'$ that $\eps'$-approximates $C$.
By $\eps' \leq \nicefrac{1}{(r+1)}$
we have $w'_{k+1}(C') 
\geq (1-\eps') \cdot w'_{k+1}(C) 
\geq (1-\nicefrac{1}{(r+1)}) \cdot r
= r - \nicefrac{r}{(r+1)} 
> r - 1$
and hence $F \subseteq C'$.
Moreover, by $\eps' \leq \eps$ 
we have $w_i(C') = w'_i(C')
\geq (1-\eps') \cdot w'_i(C)
\geq (1-\eps) \cdot w'_i(C)
= (1-\eps) \cdot w_i(C)$ 
for all $1 \leq i \leq k$.
Since an arbitrary cycle in a complete directed graph has length at least two,
the assertion is proved.

The proof for the undirected case is very similar,
as we call $(k+1)$-MaxUCC-Approx instead. 
Since in a complete undirected graph every cycle has length at least three, 
the assertion follows.
\end{proof}

\paragraph{Boolean Formulas}
We consider formulas over a finite set of propositional variables $V$,
where a \emph{literal} is a propositional variable $v\in V$ or its negation $\overline{v}$,
a \emph{clause} is a finite, nonempty set of literals,
and a \emph{formula in conjunctive normal form} (\textbf{CNF}, for short)
is a finite set of clauses.
A \emph{truth assignment} is a mapping $I \colon V \to \{0,1\}$.
For some $v \in V$,
we say that $I$ 
\emph{satisfies the literal $v$} if $I(v)=1$,
and $I$ \emph{satisfies the literal $\overline{v}$} if $I(v)=0$.
We further say that $I$ \emph{satisfies the clause $C$} and write $I(C)=1$ 
if there is some literal $l \in C$ that is satisfied by $I$.


\section{Multi-Color Discrepancy}

Suppose we have a list of items with (single-objective) weights and want to find a subset of
these items with about half of the total weight. The exact
version of this problem is of course the NP-complete problem \textsc{partition} \cite{GJ79},
and hence it is unlikely that an exact solution can be found in polynomial time.
If we allow a deviation in the order of the largest weight,
this problem can be solved in polynomial time, though. 
Surprisingly, this is
still true if the weights are not single numbers but vectors of numbers, which
follows from a classical result in discrepancy theory known as the Beck-Fiala
theorem~\cite{BF81}.  It is important to note that
the allowed deviation is independent of the number of vectors
since this enables us to use this result in
multiobjective approximation for balancing out multiple objectives at the same
time with an error that does not depend on the input size.

In the Beck-Fiala theorem and the task discussed above, we have to decide for
each item to either include it or not. In some situations in multiobjective
optimization, though, a more general problem needs to be solved:
There is a constant number of weight vectors for each item,
out of which we have to choose exactly one.
Doerr and Srivastav~\cite{DS03} showed that 
the Beck-Fiala theorem generalizes to this so-called multi-color setting
with almost the same deviation.
Their proof implicitly shows that this choice 
can be computed in polynomial time. For completeness we restate the proof
and argue for polynomial-time computability.

For a vector $x \in \rational^m$ let $\norm{x}_\infty = \max_i{|x_i|}$,
and for a matrix $A\in \rational^{m \times n}$ let $\norm{A}_1 = \max_j\sum_i |a_{ij}|$.
For $c \ge 2$, $n \ge 1$ let $\overline{M_{c,n}} = \{x \in
(\rational\cap[0,1])^{cn} \mid \sum_{k=0}^{c-1} x_{cb-k} = 1 $ for all $b \in
\{1,\dots,n\}\}$ and $M_{c,n} =
\overline{M_{c,n}} \cap \{0,1\}^{cn}$.

\begin{theorem}[Doerr, Srivastav \cite{DS03}]\label{theoremmulticolordisc} There is a polynomial-time algorithm that on
input of some $A \in \rational^{m \times cn}$, $m,n \in\uint$, $c \ge 2$ and $p
\in \overline{M_{c,n}}$ finds a
coloring $\chi \in M_{c,n}$ such that $\norm{A(p-\chi)}_\infty
\le 2\norm{A}_1$.
\end{theorem}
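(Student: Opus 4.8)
The plan is to prove this by the classical Beck--Fiala rounding scheme, adapted so that the per-block normalization constraints are respected, and to keep track of the number of iterations in order to get the polynomial running time. Write $t := \norm{A}_1$ for the maximum column $\ell_1$-norm. I would start from $x := p$ and iteratively round $x$ towards an integral point while always maintaining the block constraints $\sum_{k=0}^{c-1} x_{cb-k} = 1$. Call a coordinate \emph{floating} if it lies strictly in $(0,1)$, and call a block \emph{active} if it contains a floating coordinate. A first useful observation is that, since the fixed coordinates of any block lie in $\{0,1\}$ and the block sum is exactly $1$, no block can have exactly one floating coordinate; hence every active block has at least two floating coordinates. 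Consequently, if $f$ denotes the number of floating coordinates and $a$ the number of active blocks, then $f \ge 2a$, so $a \le f/2$.

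In each iteration I would additionally call a row $i$ \emph{active} if its floating mass $\sum_{j \text{ floating}} |a_{ij}|$ exceeds $2t$, and then look for a direction $d \in \rational^{cn}$, supported on the floating coordinates, that leaves every active row unchanged (so $A_i d = 0$) and preserves every active block sum (so $\sum_{k=0}^{c-1} d_{cb-k} = 0$). The key counting step, which is also the point where the block structure enters, is to show that such a nonzero $d$ always exists as long as $f > 0$. Summing the defining inequality over all active rows $R$ and exchanging the order of summation gives $2t\,|R| < \sum_{j \text{ floating}} \sum_i |a_{ij}| \le t f$, so $|R| < f/2$; combined with $a \le f/2$ this yields $|R| + a < f$. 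Thus the homogeneous system given by the active rows and the active block constraints has fewer equations than the number $f$ of free variables, so a nontrivial rational $d$ exists and can be found by Gaussian elimination. I would then move $x \leftarrow x + \lambda d$, choosing the largest feasible $|\lambda|$ keeping $x \in [0,1]^{cn}$; at that point at least one floating coordinate reaches $0$ or $1$ and becomes fixed. Since the block sums are preserved throughout, the process terminates after at most $cn$ iterations with an integral $x = \chi \in M_{c,n}$, which also establishes the polynomial running time.

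It remains to bound the discrepancy. The crucial monotonicity is that fixing coordinates only shrinks the floating set, so each row's floating mass is non-increasing across iterations; hence once a row becomes inactive it stays inactive, and the active rows are precisely those active on an initial segment of the process. While a row $i$ is active, the constraint $A_i d = 0$ keeps $A_i x$ at its initial value $A_i p$. From the first iteration at which row $i$ is inactive its floating mass is at most $2t$, and every coordinate still floating at that moment changes by strictly less than $1$ before being fixed; therefore the total subsequent change of $A_i x$ is strictly below $2t$ (a row that is never active satisfies the same bound relative to $p$). Combining the two phases gives $|A_i(p - \chi)| < 2t$ for every row $i$, hence $\norm{A(p-\chi)}_\infty < 2\norm{A}_1$, as required. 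The main obstacle is exactly the counting step: the threshold $2t$, rather than the threshold $\approx t$ that suffices in the single-color Beck--Fiala theorem, is precisely what reserves room for the $a \le f/2$ block constraints, and this is the reason the bound degrades to $2\norm{A}_1$.
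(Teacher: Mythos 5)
Your proposal is correct and follows essentially the same route as the paper's proof: the same iterative Beck--Fiala rounding starting from $p$, the same definitions of floating columns, active blocks (at least two floating coordinates each, so $a \le f/2$), and active rows with threshold $2\norm{A}_1$, the same counting argument $|R| < f/2$ yielding an under-determined system, and the same two-phase discrepancy analysis tracking the first iteration at which a row becomes inactive. The only cosmetic difference is that you find a kernel direction $d$ of the homogeneous system and move along $x + \lambda d$, whereas the paper finds a second solution of the inhomogeneous system and moves along the line through the two solutions --- these are equivalent.
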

\begin{proof}
Let $\Delta := \norm{A}_1$.
We start with $\chi=\chi^{(0)}=p\in\overline{M_{c,n}}$ and
will successively change it to a vector in $M_{c,n}$. We will first describe the
algorithm and then argue about its runtime.

Let $J := J(\chi) := \{j \in \{1,\dots,cn\} \mid \chi_j \notin
\{0,1\}\}$ and call the columns from $J$ floating.
Let $I := I(\chi) := \{i \in \{1,\dots,m\} \mid \sum_{j \in
J(\chi)} |a_{ij}| > 2 \Delta\}$. We will ensure that during the rounding process the following
conditions are fulfilled (this is clear from the start, because $\chi^{(0)} = p$):

\begin{align*}
(A(p-\chi))_{|I} = 0 \qquad \text{(C1)} && \chi \in \overline{M_{c,n}} \qquad \text{(C2)}
\end{align*}

Let us assume that the rounding process is at step $t$ where the current coloring
is $\chi=\chi^{(t)}$ and the conditions (C1) and (C2) hold. If there is no
floating column, i.e., $J=\emptyset$,
then $\chi \in M_{c,n}$ and thus 
$\chi$ has the desired form.

Otherwise, assume that there are still floating columns. Let $B = \{b \in
\{1,\dots,n\} \mid \exists k \in \{0,\dots,c-1\}\colon cb-k \in J\}$ be the
$c$-blocks that contain floating columns. Since $\chi \in \overline{M_{c,n}}$,
a $c$-block of $\chi$ contains either none or at least two floating columns, thus
$\card{B} \le \frac{1}{2}\card{J}$.

Since
\begin{align*}
\card{J}\cdot \Delta
= \sum_{j\in J} \Delta
\ge \sum_{j\in J}\sum_{i=1}^m |a_{ij}|
\ge \sum_{j\in J}\sum_{i \in I} |a_{ij}|
= \sum_{i \in I}\sum_{j \in J} |a_{ij}|
> \sum_{i \in I} 2 \Delta =
 \card{I} \cdot 2 \Delta
\end{align*}
it holds that $\card{I} < \frac{1}{2}\card{J}$. Consider the inhomogeneous
system of linear equations
\begin{align*}
&(A (p - \chi))_{|I} = 0\\
&\sum_{k=0}^{c-1}\chi_{cb-k}=1 && \text{for } b \in B
\end{align*}
where each $\chi_j$ is considered as a variable if $j \in J$ and as a constant
if $j \notin J$.
This system consists of at most $\card{I} + \card{B} < \frac{1}{2} \card{J} +
\frac{1}{2}\card{J} = \card{J}$ equations and $\card{J}$ variables and
hence is under-determined.
Note that the system has the solution $\chi_{|J}$
because $\chi$ fulfills the conditions (C1) and (C2). Since
it is under-determined, it also has a second solution $x\in
\rational^J$. We extend $x$
to $x_E \in \rational^{cn}$ by
\[
(x_E)_j = \begin{cases} x_j &\text{if } j \in J \\ \chi_j
&\text{otherwise.}\end{cases}
\]
Consider the line $\{(1-\lambda)\,\chi + \lambda \, x_E \mid \lambda \in
\rational\}$. Each point on this line (or rather its restriction to the components in
$J$) fulfills the system of equations and thus condition (C1).
By condition (C2) and the definition of $J$ it holds that $0 < \chi_j < 1$ for
all $j \in J$ and thus there is some $\lambda \in \rational$ such
that
$\chi^{(t+1)}:=(1-\lambda)\,\chi + \lambda \, x_E \in \overline{M_{c,n}}$ and at least one
component becomes $0$ or $1$, i.e., $J(\chi^{(t+1)}) \subsetneq J(\chi^{(t)})$.
Note that
$\chi^{(t+1)}$ fulfills (C1) and (C2)
even for the larger sets $J(\chi^{(t)})$ and $I(\chi^{(t)})$.
Continue the rounding process with $\chi := \chi^{(t+1)}$.

Since at least one column is removed from $J$ in each iteration, the rounding process will
eventually stop. Let $\chi$ be the final value of the coloring.
We show $\norm{A(p-\chi)}_\infty \le 2\Delta$. Let $1
\le i \le m$. Since at the end, $J=\emptyset$ we also have $I = \emptyset$.
Let $\chi^{(t)}$ be the first coloring such that $i \notin I$. Since
$\chi^{(t)}$ fulfills (C1) also for $I(\chi^{(t-1)})$ (or $I(\chi^{(0)})$ if
$t=0$) we
have $(A(p - \chi^{(t)}))_i = 0$. Furthermore it holds that
$\chi^{(t)}_j = \chi_j$ for all $j \notin
J(\chi^{(t)})$ and $|\chi^{(t)}_j - \chi_j|< 1$ for all $j \in J(\chi^{(t)})$.
Finally note that
$\sum_{j \in J(\chi^{(t)})} |a_{ij}| \le 2\Delta$ since $i\notin I(\chi^{(t)})$.
Combining these facts, we obtain
\[
|(A(p - \chi))_i| =
|(A(p - \chi^{(t)}))_i + (A(\chi^{(t)} - \chi))_i| =
| 0 + \sum_{j\in J(\chi^{(t)})} a_{ij}(\chi^{(t)}_j - \chi_j)| \le
2\Delta.
\]

We now analyse the runtime. Note that we have at most $cn$ iterations, which is
polynomial in the input length. In each iteration, we have to solve an
inhomogeneous system of linear equations
and we have to find a certain $\lambda \in \rational$. The system, whose size is polynomial in
the input length, can be solved in polynomial
time (see for instance \cite[Theorem 1.4.8]{GLS88}). By adding an equation of the form
$\chi_j=2$ for some suitable $j\in J$, we can find a solution different to $\chi$. The value for $\lambda$ can be obtained
in polynomial time by successively trying to fix each floating column to $0$ or $1$, solving
for $\lambda$ and checking if the resulting vector is still in
$\overline{M_{c,n}}$.
\end{proof}

\begin{corollary}\label{coromulticolordisc}
There is a polynomial-time algorithm that on input of a set of vectors $v^{j,r} \in
\rational^m$ for $1 \le j \le n$, $1 \le r \le c$ computes a
coloring $\chi\colon\{1,\dots,n\}\to\{1,\dots,c\}$ such that for each $1 \le i
\le m$ it holds that
\[
\left|
\frac{1}{c}\sum_{j=1}^n\sum_{r=1}^c v^{j,r}_i
-
\sum_{j=1}^n v^{j,\chi(j)}_i
\right| \le 2m\max_{j,r} |v^{j,r}_i|.
\]
\end{corollary}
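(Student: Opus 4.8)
The plan is to derive this corollary directly from Theorem~\ref{theoremmulticolordisc} by encoding the $c$-way choice for each of the $n$ items as a block of $c$ consecutive columns, and by using the uniform ``fractional'' point $p$ that places equal weight $\nicefrac{1}{c}$ on each of the $c$ colors. I would set up the matrix $A \in \rational^{m \times cn}$ so that its column indexed by the pair $(j,r)$ holds the vector $v^{j,r}$, i.e. $a_{i,c(j-1)+r} = v^{j,r}_i$. With this arrangement, the blocks of Theorem~\ref{theoremmulticolordisc} correspond exactly to the items $j \in \{1,\dots,n\}$, and choosing a color $\chi(j) \in \{1,\dots,c\}$ for item $j$ corresponds to setting the entry of a $0/1$-vector $\chi \in M_{c,n}$ to $1$ in exactly the position $(j,\chi(j))$ within that block.

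First I would define $p \in \overline{M_{c,n}}$ by $p_{\ell} = \nicefrac{1}{c}$ for every coordinate $\ell$; this clearly satisfies the block-sum constraint $\sum_{k=0}^{c-1} p_{cb-k} = c \cdot \nicefrac{1}{c} = 1$ for each block $b$, so $p$ is a legal input. Next I would invoke the theorem to obtain in polynomial time a coloring $\chi \in M_{c,n}$ with $\norm{A(p-\chi)}_\infty \le 2\norm{A}_1$. Unwinding the definitions, the $i$-th coordinate of $Ap$ is $\frac{1}{c}\sum_{j=1}^n\sum_{r=1}^c v^{j,r}_i$ and the $i$-th coordinate of $A\chi$ is $\sum_{j=1}^n v^{j,\chi(j)}_i$, since within each block exactly the entry corresponding to the chosen color is $1$ and the rest are $0$. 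Thus the bound $\norm{A(p-\chi)}_\infty \le 2\norm{A}_1$ is precisely the claimed inequality once we verify that $\norm{A}_1 \le m \max_{j,r}|v^{j,r}_i|$ bounds the right-hand side appropriately.

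The one step requiring a little care is relating $\norm{A}_1 = \max_{\ell}\sum_{i=1}^m |a_{i\ell}|$ to the target bound $m\max_{j,r}|v^{j,r}_i|$: the theorem's guarantee is uniform in the column, whereas the corollary's right-hand side uses a maximum over all entries. Since each column of $A$ is some $v^{j,r}$, we have $\sum_{i=1}^m |a_{i\ell}| = \sum_{i=1}^m |v^{j,r}_i| \le m \max_{i,j,r}|v^{j,r}_i|$, and hence $\norm{A}_1 \le m\max_{j,r}\max_i|v^{j,r}_i|$. The main obstacle is cosmetic rather than mathematical: the corollary states the bound with a coordinate-dependent quantity $\max_{j,r}|v^{j,r}_i|$ on the right, which is in fact smaller than the column-sum bound one gets naively, so strictly one must be slightly generous and bound $\max_{j,r}|v^{j,r}_i| \le \max_{i,j,r}|v^{j,r}_i|$ (or observe that the $\infty$-norm guarantee already localizes to coordinate $i$) to close the argument cleanly. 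I would therefore finish by writing out the coordinate-wise estimate and noting that the polynomial-time computability is inherited directly from Theorem~\ref{theoremmulticolordisc}, since constructing $A$ and $p$ from the input vectors and reading off $\chi$ are all clearly polynomial-time operations.
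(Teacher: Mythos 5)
Your reduction skeleton (block encoding of the $c$ choices per item, uniform fractional point $p \equiv \nicefrac{1}{c}$) is the same as the paper's, but your handling of the error bound has a genuine gap, and it is precisely the point the paper's proof is built around. With your unscaled matrix $a_{i,c(j-1)+r} = v^{j,r}_i$, Theorem~\ref{theoremmulticolordisc} gives, for every row $i$,
\[
\left|
\frac{1}{c}\sum_{j=1}^n\sum_{r=1}^c v^{j,r}_i
-
\sum_{j=1}^n v^{j,\chi(j)}_i
\right|
\;\le\; 2\norm{A}_1 \;\le\; 2m \max_{i',j,r} |v^{j,r}_{i'}|,
\]
a bound in terms of the \emph{global} maximum entry. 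The corollary demands the \emph{row-specific} bound $2m\max_{j,r}|v^{j,r}_i|$, which can be arbitrarily smaller: if row $1$ has entries of size $1$ and row $2$ has entries of size $10^6$, the corollary promises error $O(m)$ in row $1$, whereas your bound only gives roughly $2\cdot 10^6$ there. Your proposed repair --- ``be slightly generous and bound $\max_{j,r}|v^{j,r}_i| \le \max_{i,j,r}|v^{j,r}_i|$'' --- goes in the wrong direction: enlarging the right-hand side means you prove a strictly weaker inequality than the one claimed, not the claimed one. Your parenthetical alternative, that the $\infty$-norm guarantee ``already localizes to coordinate $i$,'' is also false: the theorem's bound $2\norm{A}_1$ is a single number, the same for every coordinate; neither the statement nor its proof yields a bound depending only on the entries of row $i$.

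The missing idea is the row scaling the paper performs (flagged there by the sentence ``Because the error bound is different for each row, we need to scale the rows of the vectors''). Set $\delta_i = \max_{j,r}|v^{j,r}_i|$ and define $a_{i,c(j-1)+r} = \frac{1}{\delta_i}v^{j,r}_i$ (taking $0$ when $\delta_i = 0$). Then every entry of $A$ has absolute value at most $1$, so $\norm{A}_1 \le m$ uniformly, and multiplying the $i$-th coordinate of the conclusion back by $\delta_i$ converts the uniform bound $2m$ into exactly the row-dependent bound $2m\delta_i$ required. A further minor omission: Theorem~\ref{theoremmulticolordisc} assumes $c \ge 2$, so the case $c = 1$ must be dispatched separately (as the paper does --- there the coloring is forced and the error is $0$).
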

\begin{proof}
The result is obvious for $c = 1$. For $c \ge 2$, we use
Theorem~\ref{theoremmulticolordisc}. Because the error bound is different for
each row, we need to scale the rows of the vectors. Let $\delta_i = \max_{j,r}
|v^{j,r}_i|$ for $1 \le i \le m$.
Let $A=(a_{i,j'}) \in \rational^{m\times cn}$ where
$a_{i,(c(j-1)+r)} = \frac{1}{\delta_i} v^{j,r}_i$ (if $\delta_i=0$, set it to
$0$) and $p \in \rational^{cn}$ such that
$p_i=\frac{1}{c}$ for all $1 \le i \le cn$.
We obtain a coloring $\chi \in
\{0,1\}^{cn}$ such that for each $1 \le j \le n$ there is exactly one $1 \le r
\le c$ such that $\chi_{c(j-1)+r} = 1$ and it holds that
$\norm{A(p-\chi)}_\infty \le 2\norm{A}_1$. Note that because of the scaling, the
largest entry in $A$ is $1$ and thus we have $\norm{A}_1 \le m$. Define
$\chi'\colon\{1,\dots,n\}\to\{1,\dots,c\}$ by $\chi'(j)=r \iff \chi_{c(j-1)+r} =
1$. We obtain for each $1 \le i \le m$:
\begin{align*}
2m\delta_i \ge 2 \norm{\delta_i \, A}_1 \ge |(\delta_i\, A(p-\chi))_i| =
\left|\sum_{j'=1}^{cn}\delta_i a_{ij'}(p_{j'} -
\chi_{j'})\right|
= \left|\sum_{j=1}^n\sum_{r=1}^c\frac{1}{c}v^{j,r}_i
- 
\sum_{j=1}^nv^{j,\chi'(j)}_i\right|
\end{align*}
\end{proof}


\section{Approximation of Multiobjective Maximum Traveling Salesman}

\paragraph{Definition}
Given some complete $\uint^k$-labeled graph as input,
our goal is to find a Hamiltonian cycle of maximum weight.
For directed graphs this problem is called 
$k$-objective maximum asymmetric traveling salesman ($k$-MaxATSP),
while for undirected graphs it is called
$k$-objective maximum symmetric traveling salesman ($k$-MaxSTSP).
Below we give the formal definition of $k$-MaxATSP,
the problem $k$-MaxSTSP is defined analogously.

\begin{quote}
\begin{tabbing}
\textbf{$\boldsymbol{k}$-Objective Maximum Asymmetric Traveling Salesman ($\boldsymbol{k}$-MaxATSP)}\\ 
Instance: \= $\uint^k$-labeled directed complete graph $(V,E,w)$\\
Solution: \> Hamiltonian cycle $C$\\
Weight: \> $w(C)$
\end{tabbing}
\end{quote}

\paragraph{Previous Work}
In 1979, Fisher, Nemhauser and Wolsey \cite{FNW79} gave a $\nicefrac{1}{2}$-approximation
algorithm for the single-objective maximum asymmetric traveling salesman ($1$-MaxATSP)
by removing the lightest edge from each cycle of a maximum cycle cover
and connecting the remaining paths to a Hamiltonian cycle.
Since undirected cycles always contain at least three edges, 
this also showed that the single-objective maximum symmetric traveling salesman ($1$-MaxSTSP)
is $\nicefrac{2}{3}$-approximable.
Since then, many improvements were achieved, 
and currently, the best known approximation ratios of 
$\nicefrac{2}{3}$ for $1$-MaxATSP and $\nicefrac{7}{9}$ for $1$-MaxSTSP
are due to Kaplan et al.\ \cite{KLS+05}
and Paluch, Mucha and Madry~\cite{PMM09}.

Most single-objective approximation algorithms 
do not directly translate to the case of multiple objectives,
and hence we need more sophisticated algorithms.
For $k$-MaxATSP and $k$-MaxSTSP, where $k \geq 2$,
the currently best known approximation algorithms are due to Manthey,
who showed a randomized $(\nicefrac{1}{2}-\eps)$-approximation of $k$-MaxATSP
and a randomized $(\nicefrac{2}{3}-\eps)$-approximation of $k$-MaxSTSP~\cite{man09}.
Recently, Manthey also showed 
a deterministic $(\nicefrac{1}{2k}-\eps)$-approximation of $k$-MaxSTSP
and a deterministic $(\nicefrac{1}{(4k-2)}-\eps)$-approximation of $k$-MaxATSP~\cite{man11}.

\paragraph{Our Results}
We show that $k$-MaxATSP is randomized $\nicefrac{1}{2}$-approximable
and $k$-MaxSTSP is randomized $\nicefrac{2}{3}$-approximable
using the following idea.
We choose a suitable
number $l$ depending only on $k$ and try all sets of at most $l$ edges $F$ using brute force. 
For each such $F$ we apply the FPRAS for $k$-$2$-\FixedDCC\ ($k$-$3$-\FixedUCC),
which exists by Theorem~\ref{theoremkcc},
fixing the edges in $F$.
For all cycle covers thus obtained, 
we select two (three) edges from each cycle and
compute a 2-coloring (3-coloring) of the cycles with low discrepancy with regard to the weight
vectors of the selected edges.
Using this coloring, we remove exactly one edge from each cycle and connect the
remaining simple paths to a single cycle in an arbitrary way. Since the coloring has low
discrepancy, we only remove about one half (one third) of the weight in each objective.
The introduced error is absorbed by choosing suitable heavy edges $F$ at the
beginning. The described procedure generally works for arbitrary $c$-cycle covers.

\begin{algorithm}[htbp]
\algosettings
\caption{\textbf{Algorithm}: \texttt{\kTSPApprox($V,E,w$)} with parameter $c
\ge 2$}
\Input{$\uint^{k}$-labeled directed/undirected complete graph $G=(V,E,w)$}
\Output{set of Hamiltonian cycles of $G$}
\BlankLine
\ForEach{\textnormal{$F_H, F_L \subseteq E \text{ with } \card{F_H}\le 3\, c\,k^2$, $\card{F_L} \leq
c\,\card{F_H}$\label{algo1:loop:heavysets}}}
{
   let $\delta \in \mathbb{N}^k$ with $\delta_i = \max\{n \in \mathbb{N} \mid 
      \text{there are $3\,c\,k$ edges $e \in F_H$ with $w_i(e) \geq n$} \}$\label{algo1:line:delta}\;
   \ForEach{$e \in E \setminus F_H$\label{algo1:loop:ClearWrongEdges}}
   {
      \lIf{$w(e) \not\leq \delta$}
      {
         modify $w$ such that $w(e)=0$ 
         for the current iteration of line~\ref{algo1:loop:heavysets}\;
      }
   }
   compute $(1-\nicefrac{1}{\card{V}})$-approximation $\mathcal{S}$ of
   $k$-$c$-\FixedDCC\ / $k$-$c$-\FixedUCC\ on $(G,F_H \cup F_L)$
      \label{algo1:line:cyclecovers}\;
   \ForEach{\textnormal{cycle cover $S \in \mathcal{S}$\label{algo1:loop:LoopAllCycleCovers}}}
   {
      let $C_1,\dots,C_r$ denote the cycles in $S$\label{algo1:line:CyclesInS}\;
      \If{\textnormal{for each $i \in \{1,\dots,r\}$, $C_i \setminus F_H$ contains
      a path of length $c$\label{algo1:lightedgesincycle}}}{
          \lForEach{$i\in\{1,\dots,r\}$\label{algo1:loop:markpaths}}
          {
             choose path $e_{i,1},\dots,e_{i,c} \in C_i \setminus F_H$ arbitrarily\;
          }
          compute some coloring $\chi \colon \{1,\dots,r\} \to \{1,\dots,c\}$ such that
          \begin{eqnarray*}
             \Sum_{i=1}^r w(e_{i,\chi(i)})
             & \leq & 2k \cdot \delta + \frac{1}{c}\Sum_{i=1}^r\sum_{j=1}^c w(e_{i,j}) 
          \end{eqnarray*}
             and remove the edges $\{e_{i,\chi(i)} \mid i=1,\dots,r\}$ from $S$\label{algo1:line:BeckFialaApplication}\;
          output the remaining edges, arbitrarily connected to a Hamiltonian cycle\;
      }
   }
}
\end{algorithm}

\begin{lemma}\label{lemmaalgorithm}
Let $c \ge 2$ and $k \ge 1$.
If there exists an FPRAS for $k$-$c$-\FixedDCC\ ($k$-$c$-\FixedUCC, resp.),
then the algorithm \kTSPApprox\ computes a randomized $(1-\nicefrac{1}{c})$-approximation
for \kATSP\ (\kSTSP, resp.).
\end{lemma}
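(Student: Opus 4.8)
The plan is to verify that the set of Hamiltonian cycles produced by \kTSPApprox\ is a $(1-\nicefrac{1}{c})$-approximate Pareto set. I fix an arbitrary Pareto-optimal Hamiltonian cycle $P$ and identify one iteration of the loop in line~\ref{algo1:loop:heavysets} whose output $(1-\nicefrac{1}{c})$-approximates $P$ in every objective. The edge sets guessed in that iteration are tailored to $P$: I let $F_H$ be the union, over all $k$ objectives $i$, of the $3ck$ edges of $P$ with the largest weight $w_i$, so that $\card{F_H}\le 3ck^2$; and I let $F_L\subseteq P$ consist of, for each edge of $F_H$, up to $c$ light edges of $P$ placed right after it along $P$, so that $\card{F_L}\le c\,\card{F_H}$. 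As line~\ref{algo1:loop:heavysets} enumerates all admissible pairs $(F_H,F_L)$, this pair occurs in some iteration, and since $c$ and $k$ are fixed the enumeration (and hence the whole algorithm) runs in polynomial time.

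I then analyse this iteration. By the definition in line~\ref{algo1:line:delta}, $\delta_i$ is at least the $3ck$-th largest value of $w_i$ over the edges of $P$, so every edge $e\in P\setminus F_H$ satisfies $w(e)\le\delta$ componentwise; hence the clearing in line~\ref{algo1:loop:ClearWrongEdges} changes no edge of $P$ and leaves $w(P)$ intact, while forcing every surviving edge of $E\setminus F_H$ to have weight $\le\delta$. Since $P$ is a Hamiltonian cycle with $F_H\cup F_L\subseteq P$, it is a feasible solution of $k$-$c$-\FixedDCC\ (resp.\ \FixedUCC) on $(G,F_H\cup F_L)$, so by Theorem~\ref{theoremkcc} the approximate Pareto set computed in line~\ref{algo1:line:cyclecovers} contains a cycle cover $S$ with $F_H\cup F_L\subseteq S$ that approximates $P$. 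The purpose of $F_L$ is to guarantee that the test in line~\ref{algo1:lightedgesincycle} succeeds for $S$: because the $c$ light edges fixed behind each edge of $F_H$ all lie in $S$ and form a path there, every cycle of $S$ contains a run of $c$ consecutive edges outside $F_H$, so $S$ is not skipped and candidate paths are chosen in line~\ref{algo1:loop:markpaths}.

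For the coloring in line~\ref{algo1:line:BeckFialaApplication} I apply Corollary~\ref{coromulticolordisc} with $m=k$, taking the $c$ candidate edges $e_{i,1},\dots,e_{i,c}$ of cycle $C_i$ as the $i$-th group of vectors. These candidates lie in $C_i\setminus F_H$ and so have weight $\le\delta$ in every objective, whence $\max_{i,j}w_l(e_{i,j})\le\delta_l$ and the corollary yields precisely the inequality asserted in the algorithm. Removing the selected edge from each cycle turns the cycle cover into a set of paths covering all vertices, and reconnecting them into a Hamiltonian cycle $H'$ only adds edges of nonnegative weight; since the removed edges all lie in $S\setminus F_H$ and their candidate weights sum to at most $w_l(S\setminus F_H)$, I obtain for every objective $l$ the bound $w_l(H')\ge w_l(F_H)+(1-\nicefrac{1}{c})\,w_l(S\setminus F_H)-2k\delta_l$.

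It remains to show this is at least $(1-\nicefrac{1}{c})w_l(P)$, and this is the main obstacle, because two errors must be absorbed: the discrepancy term $2k\delta_l$ and the multiplicative loss of the cycle-cover FPRAS. The edges of $F_H$ help here: they are never removed, so they enter $H'$ with full weight, and by the definition of $\delta_l$ at least $3ck$ of them have $w_l\ge\delta_l$, giving $w_l(F_H)\ge 3ck\,\delta_l$, i.e.\ $\tfrac{1}{c} w_l(F_H)\ge 3k\delta_l$. The delicate point is that the FPRAS guarantee is only multiplicative in the total weight $w_l(S)$, which would charge a $\nicefrac{1}{\card{V}}$-fraction of the (possibly very large) weight $w_l(F_H)$ as loss, enough to break the bound. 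I would avoid this by arranging that the multiplicative error is incurred only on the non-fixed edges, for instance by zeroing the weights of the fixed edges before the approximation (legitimate, since every feasible cover contains $F_H\cup F_L$ with identical weight, leaving the Pareto structure unaffected); this yields $w_l(S\setminus F_H)\ge(1-\nicefrac{1}{\card{V}})\,w_l(P\setminus F_H)$. As $P\setminus F_H$ consists of at most $\card{V}$ edges of weight $\le\delta_l$, the loss $(1-\nicefrac{1}{c})\nicefrac{1}{\card{V}}\,w_l(P\setminus F_H)$ is at most $\delta_l$, and combining this with $\tfrac{1}{c} w_l(F_H)\ge 3k\delta_l$ and the error $2k\delta_l$ leaves a slack of at least $(k-1)\delta_l\ge 0$, which proves $w_l(H')\ge(1-\nicefrac{1}{c})w_l(P)$. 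Finally, since each target $P$ needs only its designated FPRAS call to succeed and there are polynomially many iterations, I would boost each call and take a union bound so that all of them succeed with probability at least $\nicefrac{1}{2}$, giving the claimed randomized approximation; the undirected case is verbatim the same with $c=3$.
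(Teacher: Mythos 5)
Your construction of $F_H$, $F_L$, and $\delta$, the call to the FPRAS with fixed edges $F_H\cup F_L$, the verification of the path condition, the application of Corollary~\ref{coromulticolordisc}, and the intermediate bound $w(H')\ge w(F_H)+(1-\nicefrac{1}{c})\,w(S\setminus F_H)-2k\delta$ all match the paper's proof. The flaw is in your final step. The obstacle you describe --- that the FPRAS guarantee $w(S)\ge(1-\nicefrac{1}{\card{V}})\,w(P)$ charges a $\nicefrac{1}{\card{V}}$-fraction of the ``possibly very large'' weight $w(F_H)$ as loss, ``enough to break the bound'' --- does not exist. You use $w_i(F_H)\ge 3ck\,\delta_i$, but you overlook the companion fact that, since $F_{H,i}$ consists of the $3ck$ heaviest of the $m=\card{V}$ edges of $P$ in objective $i$, an averaging argument gives $w_i(F_H)\ge w_i(F_{H,i})\ge\frac{3ck}{m}\,w_i(P)$. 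With this, the analysis of the algorithm exactly as stated closes: rewriting your bound as $w(H')\ge(1-\nicefrac{1}{c})\,w(S)+\frac{1}{c}w(F_H)-2k\delta$ and absorbing the discrepancy error via $2k\delta\le\frac{2}{3c}w(F_H)$, one gets
\[
w(H')\;\ge\;\Bigl(1-\tfrac{1}{c}\Bigr)\Bigl(1-\tfrac{1}{m}\Bigr)w(P)+\tfrac{1}{3c}\,w(F_H)
\;\ge\;\Bigl(1-\tfrac{1}{c}\Bigr)w(P)+\Bigl(k-\bigl(1-\tfrac{1}{c}\bigr)\Bigr)\tfrac{1}{m}\,w(P)
\;\ge\;\Bigl(1-\tfrac{1}{c}\Bigr)w(P),
\]
since $k\ge 1$. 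In other words, $F_H$ is guaranteed to carry at least a $\nicefrac{3ck}{m}$-fraction of the tour's weight in every objective, and that is exactly what absorbs the multiplicative FPRAS loss; no finer control of where that loss is incurred is needed.

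Because you believed this step fails, you changed the algorithm instead of analyzing it: you zero the weights of the $F_H$ edges before the FPRAS call so that the multiplicative error applies only to $w(S\setminus F_H)$. The arithmetic for that variant is fine, but it proves a different statement. Lemma~\ref{lemmaalgorithm} asserts the approximation ratio for \kTSPApprox\ as specified, whose line~\ref{algo1:loop:ClearWrongEdges} zeroes only edges $e\in E\setminus F_H$ with $w(e)\not\leq\delta$ and whose line~\ref{algo1:line:cyclecovers} runs the FPRAS with the fixed edges at full weight; a correctness proof for a modified algorithm does not establish the lemma (and your justification for the modification is incorrect). So there is a genuine gap, and the missing ingredient is precisely the inequality $w(F_H)\ge\frac{3ck}{m}\,w(P)$, which lets the analysis of the original algorithm go through without any modification.
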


\begin{proof}
Let $k \geq 1$, $c \geq 2$, and $G=(V,E,w)$ be some $\mathbb{N}^k$-labeled (directed or
undirected) input graph
with $m=\card{V}$ sufficiently large.

We will first argue that the algorithm terminates 
in time polynomial in the length of $G$.
Since there are only polynomially many subsets $F_H,F_L \subseteq E$
with cardinality bounded by a constant,
the loop in line~\ref{algo1:loop:heavysets}
is executed polynomially often.
In each iteration the FPRAS on $G=(V,E,w)$ and $F_H \cup F_L \subseteq E$ 
terminates in time polynomial in the length of $G$ and $F_H\cup F_E$,
which means that the set $\mathcal{S}$ contains only polynomially many cycle covers.
Hence, for each iteration of the loop in line~\ref{algo1:loop:heavysets},
the loop in line~\ref{algo1:loop:LoopAllCycleCovers}
is also executed at most polynomially many times,
and overall we have polynomially many nested iterations.
In each nested iteration where each cycle of the cycle cover contains a path as required,
we compute a coloring of $\{1,\dots,r\}$ with low discrepancy.
By Corollary~\ref{coromulticolordisc} this can be done in polynomial time.
Observe that all further steps require at most polynomial time, 
and hence the algorithm terminates after polynomially many steps.

Next we argue that the algorithm will succeed with probability at least $\nicefrac{1}{2}$.
Observe that the only randomized parts of the algorithm
are the calls to the randomized cycle cover approximation algorithm
in line~\ref{algo1:line:cyclecovers}. Using amplification we can
assume that the probability that all the calls to this algorithm succeed is at least
$\nicefrac{1}{2}$.

It remains to show that if the algorithm \kTSPApprox\ succeeds, 
it outputs some $(1-\nicefrac{1}{c})$-approximate set of Hamiltonian
cycles.
Hence, for the remainder of the proof, let us assume that the algorithm
and hence all calls to the internal FPRAS succeed.
Furthermore, let $R \subseteq E$ be some Hamiltonian cycle of $G$.
We will argue that there is some iteration where the algorithm 
outputs an $(1-\nicefrac{1}{c})$-approximation of $R$.

For each $1 \leq i \leq k$, let $F_{H,i} \subseteq R$ 
be some set of $3\,c\,k$ heaviest edges of $R$ in the $i$-th component,
breaking ties arbitrarily.
Let $F_H = \bigcup_{i=1}^k F_{H,i}$. We define $F_L \subseteq R$ such that
$F_L \cap F_H = \emptyset$ and each
edge in $F_H$ is part of a path in
$F_L\cup F_H$ that contains $c$ edges from $F_L$. This is always possible as long as
$R$ is large enough. We now have 
$\card{F_H} \leq 3\,c\,k^2$ and $\card{F_L} \le c\,\card{F_H}$.
Hence in line~\ref{algo1:loop:heavysets} there will be some iteration that
chooses $F_H$ and $F_L$.
We fix this iteration for the remainder of the proof.

Let $\delta \in \uint^k$ as defined in line~\ref{algo1:line:delta}
and observe that $\delta_i = \min\{w_i(e) \mid e \in F_{H,i}\}$ for all $i$,
which means that for all edges $e \in R \setminus F_H$ we have $w(e) \leq \delta$.
Hence the loop in line~\ref{algo1:loop:ClearWrongEdges}
sets the weights of all edges $e \in E \setminus R$ 
that do not fulfill $w(e) \leq \delta$ to zero,
and these are the only weights that are modified.
In particular, this does not affect edges in $R$,
hence $w(R)$ remains unchanged. Note that since we do not increase the weight of
any edge and do not change the weight of the edges in $R$, it suffices to show
that the algorithm computes an approximation with respect to the changed
weights.

Next we obtain a $(1-\nicefrac{1}{\card{V}})$-approximate set 
$\mathcal{S}$ of $c$-cycle covers of $G$ that contain $F_H \cup F_L$.
Since $R$ is a $c$-cycle cover of $G$ with $F_H \cup F_L \subseteq R$,
there must be some $c$-cycle cover $S \in \mathcal{S}$ with $F_H \cup F_L \subseteq S$
that $(1-\nicefrac{1}{\card{V}})$-approximates $R$.
Hence in line~\ref{algo1:loop:LoopAllCycleCovers} 
there will be some iteration that chooses this $S$.
Again we fix this iteration for the remainder of the proof.

As in line~\ref{algo1:line:CyclesInS}, let $C_1,\dots,C_r$ denote the cycles in $S$.
Note that each cycle contains at least $c$ edges. Since each edge in $F_H$ is
part of a path in $F_H \cup F_L$ with at least $c$ edges from $F_L$, we even
know that each cycle contains at least $c$ edges not from $F_H$ and thus the
condition in line~\ref{algo1:lightedgesincycle} is fulfilled. Let these edges
$e_{i,j}$ be defined as in the algorithm. Note that since $e_{i,j} \notin F_H$
we have $w(e_{i,j}) \le \delta$ for all $i,j$, because the weight function was
changed accordingly.

In line~\ref{algo1:line:BeckFialaApplication} we compute some $\chi \colon
\{1,\dots,r\}\to\{1,\dots,c\}$
such that
\begin{eqnarray*}
 \Sum_{i=1}^r w(e_{i,\chi(i)}) & \leq &
 2\,k \cdot \delta + \frac{1}{c} \Sum_{i=1}^r\sum_{j=1}^c w(e_{i,j}) \\
  & \leq &
 2\,k\cdot\delta + \frac{1}{c} \cdot w(S \setminus F_H).
\end{eqnarray*}
Recall that by 
Corollary~\ref{coromulticolordisc}
such a coloring exists and can be computed in polynomial time.
Removing the chosen edges breaks the cycles into simple paths,
which can be arbitrarily connected to a Hamiltonian cycle $R'$.
For the following estimation note that
$\delta \le \frac{w(F_H)}{3\,c\,k}$ and
$w(F_H) \ge \frac{3\,c\,k}{m}w(R)$ and recall that $m = \card{V} = \card{R}$.
\allowdisplaybreaks
\begin{eqnarray*}
w(R')
& \geq & w(S) - \Sum_{i=1}^r w(e_{i,\chi(i)})\\
& \geq & w(S) - 2\,k\cdot\delta - \frac{1}{c} \cdot w(S \setminus F_H)\\
& = & \left(1-\frac{1}{c}\right)w(S) + \frac{1}{c}w(F_H) - 2\,k\cdot\delta\\
& \ge & \left(1-\frac{1}{c}\right)w(S) + \frac{1}{3\,c}w(F_H)\\
& \ge & \left(1-\frac{1}{c}\right)\left(1-\frac{1}{m}\right)w(R) +
\frac{k}{m} w(R)\\
& = & \left(1-\frac{1}{c}\right)w(R) + \left(-\left(1-\frac{1}{c}\right) +
k\right)\frac{1}{m} w(R)\\
& \ge & \left(1-\frac{1}{c}\right)w(R)
\end{eqnarray*}
This proves the assertion.
\end{proof}

It is known that 1-$c$-DCC is APX-hard for all $c \ge 3$ \cite{BM05} and that
1-$c$-UCC is APX-hard for $c \ge 5$ \cite{Man08}.
This means that, unless P $=$ NP, there is no PTAS for these
problems \cite{crka99} (and especially not for the variants with fixed edges). Furthermore, the existence of an FPRAS or
PRAS for these problems implies NP $=$ RP and thus a collapse of the
polynomial-time hierarchy, which is seen as follows.

If an APX-hard problem has a PRAS, then all problems in APX have a PRAS
and hence MAX-3SAT has one.
There exists an $\eps > 0$ and a polynomial-time computable $f$
mapping CNF formulas to 3-CNF formulas such that
if $x \in \mbox{SAT}$, then $f(x) \in \mbox{3SAT}$; and
if $x \notin \mbox{SAT}$, then there is no assignment satisfying
more than a fraction of $1-\eps$ of $f(x)$'s clauses \cite[Theorem 10.1]{al97}.
The PRAS for MAX-3SAT allows us to compute probabilistically a
$(1-\nicefrac{\eps}{2})$-approximation for $f(x)$ which in turn
tells us whether or not $x \in $ SAT. Since this procedure has no
false negatives we get RP $=$ NP, which implies
a collapse of the polynomial-time hierarchy \cite{lau83,sip83}.

So it seems unlikely that there is a PRAS for
1-$c$-DCC where $c \ge 3$ and 1-$c$-UCC where $c \ge 5$.
However, this does not necessarily mean that
the above algorithm is useless for parameters $c \ge 3$ in the directed and $c
\ge 5$ in the undirected case: The algorithm could still benefit from a
constant-factor approximation for  $k$-$c$-\FixedUCC\ or $k$-$c$-\FixedDCC.
A simple change in the
estimation shows that if the cycle cover algorithm has an approximation ratio
of $\alpha$, the above algorithm provides an approximation with ratio
$\alpha (1-\nicefrac{1}{c})$.

\begin{theorem} Let $k \ge 1$.
\begin{enumerate}
\item $\kATSP$ is randomized $\nicefrac{1}{2}$-approximable.
\item $\kSTSP$ is randomized $\nicefrac{2}{3}$-approximable.
\end{enumerate}
\end{theorem}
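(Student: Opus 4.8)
The plan is to obtain both statements by instantiating Lemma~\ref{lemmaalgorithm} with the appropriate value of the parameter $c$, using Theorem~\ref{theoremkcc} to supply the FPRAS for the edge-fixed cycle cover subproblem that the lemma requires. All of the substantive work---the brute-force enumeration of heavy edge sets $F_H, F_L$, the discrepancy-based coloring via Corollary~\ref{coromulticolordisc}, and the weight estimation showing $w(R') \ge (1-\nicefrac{1}{c})w(R)$---has already been carried out in the proof of Lemma~\ref{lemmaalgorithm}. The remaining task is merely to verify that the hypotheses of that lemma are met for the correct $c$ in each case.

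For the directed case, I would run \kTSPApprox\ with parameter $c=2$. Theorem~\ref{theoremkcc} guarantees an FPRAS for $k$-$2$-\FixedDCC, which is exactly the subroutine that Lemma~\ref{lemmaalgorithm} demands when $c=2$. The lemma then yields a randomized $(1-\nicefrac{1}{2})=\nicefrac{1}{2}$-approximation for \kATSP, establishing the first statement. For the undirected case, I would run \kTSPApprox\ with parameter $c=3$. Here Theorem~\ref{theoremkcc} provides an FPRAS for $k$-$3$-\FixedUCC, and Lemma~\ref{lemmaalgorithm} delivers a randomized $(1-\nicefrac{1}{3})=\nicefrac{2}{3}$-approximation for \kSTSP, establishing the second statement.

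The only conceptual point worth flagging---and the reason the two ratios differ---is the choice of $c$. A cycle in a complete directed graph can be as short as two edges, so removing one edge while keeping a path of length $c$ forces $c=2$, which caps the directed ratio at $\nicefrac{1}{2}$. In the undirected case every cycle has length at least three, and this is precisely what permits $c=3$, so that the coloring absorbs only about one third of the weight in each objective and yields the stronger $\nicefrac{2}{3}$ ratio. Since both prior results carry the full burden of the argument, there is no genuine obstacle here: the proof is a direct composition of Theorem~\ref{theoremkcc} with Lemma~\ref{lemmaalgorithm}, and the main thing to get right is simply matching each value of $c$ to the cycle-cover problem whose FPRAS existence has been secured.
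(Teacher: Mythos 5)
Your proposal is correct and matches the paper's proof exactly: the theorem follows by combining Theorem~\ref{theoremkcc} with Lemma~\ref{lemmaalgorithm}, instantiating $c=2$ for \kATSP\ and $c=3$ for \kSTSP. Your added remark about why the two values of $c$ differ (cycle lengths in directed versus undirected complete graphs) is a correct reading of the paper's setup.
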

\begin{proof}
We combine Theorem~\ref{theoremkcc} and 
Lemma~\ref{lemmaalgorithm}.
\end{proof}


\section{Approximation of Multiobjective Maximum Satisfiability}

\paragraph{Definition}
Given a formula in CNF and a function that maps each clause
to a $k$-objective weight, our goal is to find truth assignments
that maximize the sum of the weights of all satisfied clauses.
The formal definition is as follows.
\begin{quote}
    \begin{tabbing}
        {\bf $\boldsymbol{k}$-Objective Maximum Weighted Satisfiability
        (\kMaxSATbf)}\\
        Instance: \= Formula $H$ in CNF over a set of variables $V$,
        weight function $w \colon H \to \uint^k$\\
        Solution: \> Truth assignment $I \colon V \to \{0,1\}$\\
        Weight: \> Sum of the weights of all clauses satisfied by $I$, i.e.,
         $w(I) = \Sum_{\substack{C \in H\\I(C) = 1}} w(C)$
    \end{tabbing}
\end{quote}

\paragraph{Previous Work}
The first approximation algorithm for maximum satisfiability is due to Johnson \cite{Joh74},
whose greedy algorithm showed that the single-objective $1$-MaxSAT problem 
is $\nicefrac{1}{2}$-approximable.
Further improvements on the approximation ratio followed, 
and the currently best known approximation ratio of $0.7846$ for $1$-MaxSAT
is due to Asano and Williamson \cite{AW02}.

Only little is known about $k$-MaxSAT for $k\geq 2$.
Santana et.\ al.\ \cite{SBLL09} apply genetic algorithms to a version of the problem that is
equivalent to $k$-MaxSAT with polynomially bounded weights.
To our knowledge, the approximability of $k$-MaxSAT for $k \geq 2$
has not been investigated so far.

\paragraph{Our Results}
We show that $k$-MaxSAT is $\nicefrac{1}{2}$-approximable mainly by transferring the
idea that for any truth assignment, the assignment itself or its complementary
assignment satisfies at least one half of all clauses
to multidimensional objective functions.
We choose some suitable parameter $l \in \uint$ 
depending only on the number of objectives.
For a given formula in CNF we try all possible partial truth assignments for each
set of at most $l$ variables using brute force and extend each partial assignment to a full asignment in the following way:
For each remaining variable $v$ we compute two vectors roughly representing the
weight gained by the two possible assignments for $v$.
We then compute a $2$-coloring of those weight vectors with low discrepancy
which completes the partial assignment to a truth assignment
whose weight is at least one half of the total weight of the remaining
satisfiable clauses minus some error.
This error can be compensated by choosing $l$ large enough such that the partial
assignment already contributes a large enough weight.
This results in a $\nicefrac{1}{2}$-approximation for \kMaxSAT.

\paragraph{Notations}
For a set of clauses ${H}$ and a variable $v$ let ${H}[v=1] =
\{C \in {H} \mid v \in C\}$ be the set of clauses 
that are satisfied if this variable
is assigned one, and analogously ${H}[v=0] =
\{C \in {H} \mid \oli{v} \in C\}$ be the set of clauses
that are satisfied if this variable is assigned zero. This notation is extended to
sets of variables $V$ by ${H}[V=i]=\bigcup_{v \in V}{H}[v=i]$ for
$i=0,1$.

\begin{algorithm}[H]
    \algosettings
    \caption{\textbf{Algorithm}: \texttt{\kWSATApprox($H,w$)}}
    \Input{Formula $H$ in CNF over the variables $V=\{v_1,\dots,v_m\}$,
     $k$-objective weight function $w \colon H \to \uint^{k}$}
    \Output{Set of truth assignments $I \colon V \to \{0,1\}$}
    \BlankLine

    \ForEach{\textnormal{disjoint $V^0,V^1 \subseteq V$ with $\card{(V^0 \cup
    V^1)} \le 4k^2$}}{
        $G := H \setminus (H[V^0=0] \cup H[V^1=1])$\;
        $\hat{V}^{(1-i)} := \{v \in V \setminus (V^0 \cup V^1) \mid 
        4k \cdot w(G[v=i])\not\leq w(H \setminus G)\}$, $i=0,1$\;
        \If{$\hat{V}^0 \cap \hat{V}^1=\emptyset$}{
            $V' := V \setminus (V^0 \cup V^1 \cup \hat{V}^0 \cup \hat{V}^1)$, $L' := V' \cup \{\oli{v} \mid v \in V'\}$\;
            $G' := (G[V'=0] \cup G[V'=1]) \setminus (G[\hat{V}^0=0] \cup G[\hat{V}^1=1])$\;
            for $v_j \in V'$ let
            $x^{j,i}=\sum \{ \frac{w(C)}{\card{(C \cap L')}} \mid C \in G'[v_j=i]\}$
            for $i =0,1$\;
            compute some coloring $\chi\colon V' \to \{0,1\}$ such that\label{satalgolinecomputecoloring}
            \[
                \sum_{v_j \in V'} x^{j,\chi(j)} \ge \frac{1}{2}\sum_{v_j \in
                V'}(x^{j,0}+x^{j,1}) - 2k \delta
            \]
            where $\delta_r = \max\{x^{j,i}_r \mid v_j \in V',i\in\{0,1\}\}$\;
            let $I(v) := i$ for $v \in V^i\cup\hat{V}^i \cup \chi^{-1}(\{i\})$,
            $i=0,1$\;
            output $I$
        }
    }
\end{algorithm}

\begin{theorem}
    \kMaxSAT\ is $\nicefrac{1}{2}$-approximable
    for any $k \geq 1$.
\end{theorem}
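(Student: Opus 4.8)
The plan is to analyze the deterministic algorithm \kWSATApprox\ in the same spirit as Lemma~\ref{lemmaalgorithm}, showing that for every target assignment $I^*$ some iteration outputs an assignment $I$ with $w_r(I)\ge\frac12 w_r(I^*)$ in every objective $r$. Since the only non-trivial computational step is the low-discrepancy coloring, which by Corollary~\ref{coromulticolordisc} (applied with $k$ rows and $c=2$) runs in polynomial time, and since the loop over $V^0,V^1$ ranges only over sets of constant size $4k^2$, polynomial running time is immediate; the algorithm is deterministic, so no success probability needs to be argued.

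For correctness I fix $I^*$ and must first exhibit the ``correct'' iteration, that is, a guess $V^0,V^1$ consistent with $I^*$ (so $I^*(v)=0$ on $V^0$ and $I^*(v)=1$ on $V^1$) of size at most $4k^2$. The role of this guess mirrors that of the heavy edge set in the traveling-salesman proof: it should fix all variables that carry so much residual weight that the discrepancy error could not be absorbed. Concretely, I want the residual set $G=H\setminus(H[V^0{=}0]\cup H[V^1{=}1])$ to satisfy, for every leftover variable $v\in V'$, every objective $r$, and both $i$, the bound $4k\cdot w_r(G[v{=}i])\le w_r(H\setminus G)$; this is exactly the condition $v\notin\hat V^0\cup\hat V^1$, and it forces $\delta_r=\max_{j,i}x^{j,i}_r\le\frac1{4k}w_r(H\setminus G)$, hence $2k\delta_r\le\frac12 w_r(H\setminus G)$. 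Moreover the guess must be chosen so that the forced assignment $\hat V^0\to0$, $\hat V^1\to1$ agrees with $I^*$: a variable that is heavy only for the value $I^*$ does \emph{not} give it is then assigned its $I^*$-value, while a variable heavy for both values (which would violate $\hat V^0\cap\hat V^1=\emptyset$) would be heavy for its own $I^*$-value and should already have entered the guess.

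Granting such a guess, the verification is a clean chain. Because every clause of $G'$ contains a literal over $V'$, distributing each clause weight evenly among its $L'$-literals yields $\Sum_{v_j\in V'}(x^{j,0}+x^{j,1})=w(G')$, and the coloring property of line~\ref{satalgolinecomputecoloring}, which exists by Corollary~\ref{coromulticolordisc}, shows that the $G'$-clauses satisfied by $\chi$ have weight at least $\Sum_{v_j\in V'}x^{j,\chi(j)}\ge\frac12 w(G')-2k\delta$. Since $I$ additionally satisfies all of $H\setminus G$ and all of the clauses $G[\hat V^0{=}0]\cup G[\hat V^1{=}1]$, I obtain $w_r(I)\ge w_r(H\setminus G)+w_r(\hat V\text{-sat})+\frac12 w_r(G')-2k\delta_r\ge\frac12 w_r(H\setminus G)+w_r(\hat V\text{-sat})+\frac12 w_r(G')$. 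On the other hand, because $I$ and $I^*$ agree on all fixed variables, every clause satisfied by $I^*$ lies in $H\setminus G$, in the $\hat V$-satisfied set, or, having a $V'$-literal, in $G'$, so $w_r(I^*)\le w_r(H\setminus G)+w_r(\hat V\text{-sat})+w_r(G')$; combining the two bounds gives $w_r(I)\ge\frac12 w_r(I^*)$.

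The hard part will be the combinatorial selection in the second paragraph: producing one guess of size at most $4k^2$ that is simultaneously consistent with $I^*$, forces $\hat V^0\cap\hat V^1=\emptyset$, and drives every leftover variable below the $\frac1{4k}$ threshold. A naive greedy loop that inserts any variable heavy at its $I^*$-value restores consistency, but since each insertion only multiplies the captured weight $w(H\setminus G)$ by a factor $1+\frac1{4k}$, it terminates after a number of steps depending on the \emph{magnitude} of the weights rather than on $k$. The real difficulty, absent in the edge-disjoint traveling-salesman setting, is that a single clause is shared among all the variables of its literals, so capturing it through one variable does not ``use it up'' for the counting. I expect the $4k^2$ bound to follow from a charging argument that accounts for each clause fractionally across its literals, exactly the device already used to define $x^{j,i}$, selecting at most a fixed number of heavy variables per objective and per truth value, and I regard making this selection and its size bound rigorous as the main obstacle.
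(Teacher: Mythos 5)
Your third paragraph (the verification chain) is sound and is essentially the paper's own accounting: $\sum_{v_j\in V'}(x^{j,0}+x^{j,1})=w(G')$, the bound $x^{j,i}\le \frac{1}{4k}w(H\setminus G)$, the discrepancy coloring from Corollary~\ref{coromulticolordisc}, and the final comparison against $w(I^*)$ all go through as you describe. But the statement you explicitly defer --- producing a guess $V^0,V^1$ of size at most $4k^2$, consistent with $I^*$, that drives every leftover variable below the $\frac{1}{4k}$ threshold --- is exactly the paper's Claim~\ref{claimv0}, and it is the combinatorial heart of the whole theorem. Declaring it ``the main obstacle'' and offering only an expectation that some fractional charging argument will close it leaves the proof genuinely incomplete: without that claim you have no bound on the discrepancy error $2k\delta$ relative to $w(H\setminus G)$, and the entire chain collapses.

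Moreover, your diagnosis of why the selection is hard points in the wrong direction. The paper's resolution is not a fractional charging over literals; it is a greedy-maximum selection with a \emph{fixed} number of steps, followed by a pigeonhole argument. Concretely: run $4k$ rounds, each round cycling once through the $k$ objectives; at step $j=kt+r$ choose the not-yet-chosen variable $v$ maximizing $w_r(H_{j-1}[v=I^*(v)])$, set $\alpha_j := w(H_{j-1}[v=I^*(v)])$, and remove those clauses, $H_j := H_{j-1}\setminus H_{j-1}[v=I^*(v)]$. This picks exactly $4k^2$ variables regardless of weight magnitudes, and only at their $I^*$-values, so consistency is automatic. Your worry that ``capturing a clause through one variable does not use it up'' is neutralized by the removal step: the captured clause sets are pairwise disjoint and their union is $H\setminus G$, whence $w_r(H\setminus G)\ge \sum_{t=0}^{4k-1}(\alpha_{kt+r})_r$. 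Now if some leftover $v$ had $w_r(G[v=I^*(v)])>\frac{1}{4k}w_r(H\setminus G)$, the pigeonhole principle yields a round $t$ with $(\alpha_{kt+r})_r < w_r(G[v=I^*(v)])$; since $G\subseteq H_{kt+r-1}$, the variable $v$ would have captured strictly more at step $kt+r$ than the variable actually chosen there, contradicting maximality. In short, the greedy you dismissed as ``naive'' fails only because you let it loop until the threshold property holds; the correct move is to stop after $4k$ greedy-maximum steps per objective and prove the threshold property \emph{a posteriori} by pigeonhole. This is the missing idea, and without it your proposal does not constitute a proof.
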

\begin{proof}
We show that this approximation is realized by \kWSATApprox.
First note that this algorithm runs in polynomial time since $k$ is constant and
the coloring in line~\ref{satalgolinecomputecoloring} can be computed in
polynomial time using Corollary~\ref{coromulticolordisc}.
For the correctness, let $(H,w)$ be the input where $H$ is a formula over the
variables $V=\{v_1,\dots,v_m\}$ and $w \colon H \to \uint^{k}$ is the
$k$-objective weight function.
Let $I_o\colon V \rightarrow \{0,1\}$ be an optimal truth assignment.
We show that there is a loop iteration of \kWSATApprox($H,w$)
that outputs a truth assignment $I$ such that $w(I) \ge w(I_o)/2$.
To this end, we first show that there is an iteration of the loop that uses suitable
sets $V^0$ and $V^1$.

\begin{claim}\label{claimv0}
There are sets $V^i \subseteq I_o^{-1}(\{i\})$, $i=0,1$
with $\card{(V^0 \cup V^1)} \le 4k^2$ such that for
$G = H \setminus (H[V^0=0] \cup H[V^1=1])$ and any
$v \in V \setminus (V^0 \cup V^1)$ it holds that
\begin{equation} \label{eqn_762963}
w(G[v=I_o(v)]) \leq \frac{1}{4k} w(H \setminus G).
\end{equation}
\end{claim}
\begin{proof}
The assertion obviously holds for $\card{V} \le 4k^2$.
Otherwise, we define variables $
\{u_{kt+r} \in V \mid r=1,2,\dots,k$ and $ t=0,1,\dots,4k-1\}$,
sequentially in a greedy fashion:

\texttt{%
$H_0 := H$\\
for $t := 0$ to $4k-1$:\\
\hspace*{1em} for $r := 1$ to $k$:\\
\hspace*{2.5em} $j := kt+r$\\
\hspace*{2.5em} choose $v \in V \setminus \{u_1,\dots,u_{j-1}\}$ such that
$w_r(H_{j-1}[v=I_o(v)])$ is maximal\\
\hspace*{2.5em} $u_j := v$, $H_j := H_{j-1}\setminus H_{j-1}[v=I_o(v)]$, $\alpha_{j} := w(H_{j-1}[v=I_o(v)])$.
}

Let $V^i=I_o^{-1}(\{i\}) \cap \{u_1,\dots,u_{4k^2}\}$ for $i=0,1$.  
We now show inequality~(\ref{eqn_762963}), so let
$v \in V \setminus (V^0 \cup V^1)$. Assume that there is some $r \in
\{1,\dots,k\}$ such that $w_r(G[v=I_o(v)]) > \frac{1}{4k} w_r(H \setminus G)$.
Because the union $\bigcup_{j=1}^{4k^2}
H_{j-1}[u_j=I_o(u_j)] = H \setminus G$ is disjoint, we get
\begin{align*}
w(H \setminus G) = \sum_{r'=1}^{k}\sum_{t=0}^{4k-1} \alpha_{kt+r'} \ge
\sum_{t=0}^{4k-1} \alpha_{kt+r}
\intertext{and thus}
w_r(G[v=I_o(v)]) > \sum_{t=0}^{4k-1} \frac{(\alpha_{kt+r})_r}{4k}.
\end{align*}
Hence, by the pigeonhole principle, there must be some $t \in \{0,1,\dots,4k-1\}$ such that
$w_r(G[v=I_o(v)]) > (\alpha_{kt+r})_r$. But since $G \subseteq H_{kt+r-1}$
and thus even $w_r(H_{kt+r-1}[v=I_o(v)]) \ge w_r(G[v=I_o(v)]) >
(\alpha_{kt+r})_r$, the variable $v$ should have been chosen in step $j=kt+r$,
which is a contradiction. This means that $w(G[v=I_o(v)]) \le
\frac{1}{4k}w(H\setminus G)$ holds for all $v \in V \setminus (V^0\cup V^1)$.
\end{proof}

We choose the iteration of the algorithm where $V^0$ and $V^1$ equal the sets
whose existence is guaranteed
by Claim~\ref{claimv0}. In the following, we use the variables as they are
defined in the algorithm.
Observe that by the claim it holds that $I_o(v)=i$ for all $v \in \hat{V}^{i}$
for $i=0,1$.
Note that
\begin{align*}
\sum_{v_j \in V'} x^{j,0} + x^{j,1} =
        \sum_{v_j\in V'}\sum_{i\in\{0,1\}}\sum_{C \in G'[v_j=i]} \frac{w(C)}{\card{(C \cap L')}}
= \sum_{C \in G'} \card{(C \cap L')} \frac{w(C)}{\card{(C \cap L')}}
 = w(G').
\end{align*}
Furthermore, for all $v_j \in V'$ and $i=0,1$, we have the bound
$x^{j,i} \le w(G'[v_j=i]) \le w(G[v_j=i]) \le \frac{1}{4k} w(H \setminus G)$
because of the definition of $V'$ and $\hat{V}^i$.
By Corollary~\ref{coromulticolordisc}, we find a coloring $\chi\colon
V'\to\{0,1\}$ such that for each $1 \le i \le k$ it holds that
\[
\left|
\frac{1}{2}\sum_{v_j\in V'}\sum_{r=0}^1 x^{j,r}_i
-
\sum_{v_j \in V'} x^{j,\chi(v_j)}_i
\right| 
~~\le ~~
2k\max_{j,r} |x^{j,r}_i|
~~\le ~~
2k \frac{1}{4k} w_i(H \setminus G)
~~= ~~
\frac{1}{2} w_i(H \setminus G)
\]
and hence
\[
\sum_{v_j \in V'} x^{j,\chi(v_j)} \ge \frac{1}{2}\sum_{v_j \in
V'}(x^{j,0}+x^{j,1}) - \frac{1}{2} w(H \setminus G)
=\frac{1}{2}(w(G') - w(H \setminus G)).
\]
For $I$ being the truth assignment generated in this iteration it holds that
\begin{align}\label{eqn227736}
w(\{C \in G' \mid I(C) = 1\}) \ge
\sum_{v_j \in V'} x^{j,\chi(v_j)} \ge \frac{1}{2}(w(G') -w(H \setminus G)).
\end{align}
Furthermore, since $I$ and $I_o$ coincide on $V \setminus V'$, we have
\begin{align}
w(\{C \in H \setminus G' \mid I(C) = 1\}) &=
w(\{C \in H \setminus G' \mid I_o(C) = 1\})\label{eqn2343}\\
&\ge w(\{C \in H \setminus G \mid I_o(C) = 1\})\notag\\
& = w(\{H \setminus G\}).\label{eqn33438}
\end{align}
Thus we finally obtain
\begin{align*}
w(I) &\;=\; w(\{C \in H \setminus G' \mid I(C) = 1\}) + w(\{C \in G'\mid I(C) = 1\})\\
&\stackrel{\eqref{eqn227736}}{\ge} w(\{C \in H \setminus G' \mid I(C) = 1\}) +
\tfrac{1}{2}(w(G') - w(H \setminus G))\\
&\stackrel{\eqref{eqn2343}}{=} w(\{C \in H \setminus G' \mid I_o(C) = 1\}) +
\tfrac{1}{2}(w(G') - w(H \setminus G))\\
&\stackrel{\eqref{eqn33438}}{\ge} \tfrac{1}{2} w(\{C \in H \setminus G' \mid I_o(C) = 1\}) +
\tfrac{1}{2}w(G')\\
&\ge\; \tfrac{1}{2}w(I_o).\qedhere
\end{align*}
\end{proof}


\bibliographystyle{alpha}
\bibliography{tspbib}


\end{document}